\newcommand \reals {\mathbb{R}}
\newcommand \prob {\operatorname{Pr}}
\newcommand \qsingle {q_\text{single}}
\newcommand \qwithin {q_\text{within}}
\newcommand \qacross {q_\text{across}}
\newcommand \topics {{\mathbb O}}
\newcommand \users {{\mathbb U}}
\newtheorem{thm}{Theorem}
\newtheorem{definition}[thm]{Definition}
\newtheorem{proposition}[thm]{Proposition}
\title{Differentially Private Synthetic Data Release for Topics API Outputs}
\author{%
Travis Dick\\
  Google Research\\
  \texttt{tdick@google.com} \\
  \and
Alessandro Epasto\\
  Google Research\\
\texttt{aepasto@google.com}\\
\and
Adel Javanmard\\
 USC, Google Research\\
\texttt{ajavanma@usc.edu}\\
\and
Josh Karlin\\
Google Chrome\\
\texttt{jkarlin@google.com}\\
\and 
Andr\'es Mu\~noz Medina\\
Google Chrome\\
\texttt{ammedina@google.com}\\
\and
Vahab Mirrokni\\
Google Research\\
\texttt{mirrokni@google.com}\\
\and
Sergei Vassilvitskii\\
Google Research\\
\texttt{sergeiv@google.com}\\
\and 
Peilin Zhong\\
Google Research\\
\texttt{peilinz@google.com}\\
}
\begin{document}

\maketitle

\begin{abstract}
The analysis of the privacy properties of Privacy-Preserving Ads APIs is an area of research that has received strong interest from academics, industry, and regulators. Despite this interest, the empirical study of these methods is severely hindered by the lack of publicly available data. Reliable empirical analysis of the privacy properties of an API, in fact, requires access to a dataset consisting of realistic API outputs for a large collection of users; however, privacy concerns prevent the general release of such data to the public.

In this work, we address this problem by developing a novel methodology to construct synthetic API outputs that are simultaneously realistic enough to enable accurate study and provide strong privacy protections. We focus on one of the Privacy-Preserving Ads APIs: the Topics API, part of Google Chrome's Privacy Sandbox, which enables interest-based advertising without relying on third-party cookies. We developed a methodology to generate a differentially-private dataset that closely matches the re-identification risk properties of the real Topics API data. The use of differential privacy provides strong theoretical bounds on the leakage of private user information from this release.

Our methodology is based on first computing a large number of differentially-private statistics describing how output API traces evolve over time. Then, we design a parameterized distribution over sequences of API traces and optimize its parameters so that they closely match the statistics obtained. Finally, we create the synthetic data by drawing from this distribution.

Our work is complemented by an open-source release of the anonymized dataset obtained by this methodology. We hope this will enable external researchers to analyze the API in-depth and replicate prior and future work on a realistic large-scale dataset. We believe that this work will contribute to fostering transparency regarding the privacy properties of Privacy-Preserving Ads APIs.
\end{abstract}

\section{Introduction}
Advertisement personalized to a user's browsing behavior is a fixture of the modern internet. The ability to target ads based on a user's latent interests has many benefits for the ecosystem, as personalized ads are often more relevant to the user, obtain better performance for the advertiser, and achieve higher revenue for the publishers. All these benefits, which fund the free web, however, come with a privacy cost. Traditionally, to sustain personalized advertisements, many adtechs have extensively used third-party cookies to amass large and detailed profiles of users' browsing activities \cite{englehardt2016online}. As the extensive reliance on third-party cookies incurs privacy risks, many browser vendors have recently proposed a variety of Privacy-Preserving Ads (PPA) APIs to enable personalized ads without the use of third-party cookies. These PPA APIs  include, for instance, the Chrome Privacy Sandbox's APIs~\cite{privacysandbox}, Microsoft Edge's Private Ad Selection API \cite{microsoftprivatead} and Apple's AdAttributionKit \cite{appleadattribution}. The main goal of all these initiatives is to significantly reduce the privacy risk of advertisement compared to methods using third-party cookies.

Given the widespread deployment on the web of such PPA APIs, academics, browser vendors, and regulators alike have all shown interest in analyzing the privacy properties of such methods~\cite{MozillaFloc,topics-explainer,carey2023measuring,jha2024re,alvim2024privacy,alvim2023quantitative,verna2024first,beugin2024publicreproducibleassessmenttopics,floc}. Unfortunately, however, the empirical study of these proposed methods by academic researchers is hindered by the lack of publicly available data on the output of such APIs. 

Researchers not affiliated with the internet industry, for obvious reasons including privacy, do not have easy access to high-quality real browsing data from a large number of users. For this reason prior work from academic researchers has often relied on small scale data (few thousands of users) or on fully synthetic data (not based on real browsing histories)~\cite{beugin2024publicreproducibleassessmenttopics,jha2024re,alvim2024privacy,alvim2023quantitative}. Using data that does not match real browsing histories of a representative large population can skew privacy analysis arbitrarily and in unpredictable ways. Moreover, the accurate replication of prior work from industry researchers with access to large-scale real data~\cite{topics-explainer,carey2023measuring} is not feasible on a small-scale dataset. This hinders the ability to credibly verify (or disprove) the claims of browsing vendors. 

To address this problem, in this work, we design a methodology to release a privacy-preserving large-scale dataset of PPA APIs output traces. We focus on one API, Chrome's Privacy Sandbox Topics API~\cite{topics-github} which has been extensively studied in the past in terms of its re-identification risk~\cite{topics-explainer,carey2023measuring,jha2024re,alvim2024privacy,alvim2023quantitative,verna2024first,beugin2024publicreproducibleassessmenttopics}. Our methodology, and the dataset we release, has the aim of allowing any researcher to study the re-identification risk of the API and to replicate prior work. \emph{Our data release (and links to the supporting code) are made available through kaggle~\cite{datarelease}.}

The main technical contribution of our work is a methodology to construct synthetic API outputs that mimic the real API traces while providing strong provable privacy guarantees. To ensure privacy, we use the gold-standard provable privacy protection of Differential Privacy (DP), mitigating the risk that our release leaks information about individual users. Thanks to DP, our methodology allows us to safely provide, for the first time, a release based on the real browsing data of hundreds of millions of users (as opposed to prior work with thousands of users~\cite{beugin2024publicreproducibleassessmenttopics,jha2024re}).

In terms of accuracy we show that our methodology generates traces that have similar re-identification risk to the real data, thus facilitating external researchers in verifying the claims in prior work and in studying the Topics API in depth. We believe that large scale data releases like ours will further deepen the understanding of the research community on the properties of Privacy Preserving Ads technologies.

\subsection{Methodology}

Our methodology to obtain a privacy-preserving data release is based on four steps.

{\bf Statistics extraction.} The first step analyzes the Topics API data of a large number of users---see Section~\ref{sec:topics-api} for details on how the API works. From this data, we define a large number of statistics describing real users' topic traces and their evolution over time. The statistics include the point-wise distribution of the topics in output and their pairwise co-occurrence within (and across) periods of time. Crucially, we obtain these aggregate statistics with strong privacy protection using differential privacy--see Section~\ref{sec:stat} for details on how the statistics are obtained. The rest of the data release methodology uses as input {\it exclusively} these anonymous, highly aggregated, differentially private statistics, thus bounding the privacy risk of all downstream modeling steps.

{\bf Model optimization.} Next, we design a parameterized model describing a distribution over sequences of Topics API data. The parameters of the model are optimized with stochastic gradient descent methods to reduce the discrepancy between the statistics induced by the model on the data and the statistics obtained in the previous step---see Section~\ref{sec:model} for how the model is defined and trained. This model is, by the post-processing properties of differential privacy, itself differentially private.

{\bf Data sampling from the model.} Finally, we use the prior obtained DP model to create the synthetic Topics API data by drawing i.i.d. samples from its distribution (see \Cref{alg:sampling}).

Notice that both the model and data sampled from it are derived solely by post-processing the DP statistics obtained in the first step. Hence, we can publish the statistics, the parameters of the model, and the data sampled while satisfying differential privacy.

{\bf Verification of the data accuracy.} After we generate the data, we verify its quality by empirically comparing its properties with the original data. First, we verify that the synthesized data very closely matches the statistics observed in the real data.

Then, we follow the methodology of prior work~\cite{carey2023measuring} and empirically verify that the synthetic data has good agreement with the re-identification rate of the real data (see~\cref{sec:exp}). This allows us to replicate prior work with good accuracy.

\section{Related work}
\label{sec:rel}
Understanding the effect of different web technologies on user privacy has been a major focus of the privacy and security community for decades. In this area of work, the privacy risks involved in the use of third-party cookies are well-understood~\cite{6234427}.

Another well-established area of research is the measurement and mitigation of the so-called browser fingerprinting risk~\cite{browser_fingerprinting,unique_browser,ml_for_fingerprinting,fingerprintingfr,munir2023cookiegraph}, i.e., the potential for tracking users based on information provided in HTTP request headers or extracted from other web APIs (e.g., user agent, screen size, fonts, etc.).

As web browsers have reduced access to third-party cookies while increasingly deploying~\cite{johnson2024unearthing} a variety of Privacy Preserving Ads APIs, the privacy community has devoted significant attention to understanding the privacy implications of these APIs.

Some research in this space has focused on bounding the theoretical worst-case privacy risk properties of APIs, for instance through the lens of differential privacy, as in the case of the Attribution Reporting API~\cite{ghazi2024differential}.

Another research direction, which is the focus of this work, has been understanding the impact of these APIs in the wild by empirically analyzing real browsing data. In this body of work, the most relevant research to our paper is the study of the re-identification risks of the Topics API~\cite{topics-explainer,carey2023measuring,jha2024re,alvim2024privacy,alvim2023quantitative,verna2024first,beugin2024publicreproducibleassessmenttopics}. These papers have focused on assessing the likelihood of an attacker correctly matching the identity of a user across two sites based on the Topics API outputs observed on both sites~\cite{carey2023measuring}. Results in these analyses have varied widely (even using similar methodology) due to the vastly different underlying datasets used to measure the risk.

For instance, consider the size of the data, which is known to affect the re-identification risk~\cite{carey2023measuring}, as re-identifying a user is simpler among fewer other users. While some industry researchers~\cite{carey2023measuring,topics-explainer} could perform their analysis on proprietary datasets containing hundreds of millions of real users from different countries, several academic studies have instead been based on datasets with around $100-2000$ users~\cite{beugin2024publicreproducibleassessmenttopics,jha2024re}. The work of Beugin et al.~\cite{beugin2024publicreproducibleassessmenttopics}, for instance, uses data from about 2000 volunteers in Germany. Other work used imputed API outputs from data collected long before the introduction of the PPA APIs (e.g., AOL logs from 2006~\cite{alvim2024privacy}) or pure theoretical analysis~\cite{alvim2023quantitative}. This is expected, as academic researchers unfortunately do not have easy access to good-quality, large-scale browsing data.

Unsurprisingly, the use of datasets with different scales, creation methodologies, and potential biases has resulted in widely different estimates of re-id risk, sometimes disagreeing by orders of magnitude.

The goal of our work is to publish an open-source dataset that is based on the real, large-scale proprietary data usually available only to tech companies, thus enabling the replication of prior work by all researchers. This dataset will also allow for comparing different study methodologies on the same data.

Finally, a related area of research is that of DP Machine Learning, and especially synthetic data generation with DP (we refer to a research survey for more details~\cite{ponomareva2023dp}). A large body of work in this area has focused on designing general-purpose DP models for generating text data~\cite{tang2023privacy}, images~\cite{lin2023differentially}, and tabular data~\cite{swanberg2025apiaccessllmsuseful}, for instance. Our work is in the line of research of synthetic data generation with DP using marginal-based methods~\cite{RAP2021}. Our paper specifically designs a custom synthetic data model for the Topics API, based on the unique properties of the API discussed in the next section, and that requires only aggregated access to statistical information on the private data.

\section{The Topics API}
\label{sec:topics-api}

\begin{algorithm}
\caption{Topics API. 
\newline {\bf Input:} Topics $\topics$, probability to return a random
topic $p$. }
\label{alg:topics}
\begin{algorithmic}
\STATE {\bf On device (at the end week $i$):} Select the set $S_i$ of top $5$ most frequent topics in the {\it on-device} websites visited by the user, and that have called the API, during the week $i$. If fewer than $5$ topics are present, pad with uniformly at random (u.a.r.) topics from $\topics$.
\STATE \textbf{On call GetTopic()} from website $w$, during week $i$, for user $u$:
\STATE Seed the random number generator with $w, i, u$.
\STATE Flip coin with heads probability $p$.
\IF{Heads}
\RETURN Element of $\topics$ chosen u.a.r. 
\ELSE
\RETURN Element of $S_{i-1}$ chosen u.a.r. 
\ENDIF
\end{algorithmic}
\end{algorithm}

As our work focuses on studying in depth the privacy properties of the Topics API~\cite{topics-github} part of Google Chrome's Privacy Sandbox~\cite{privacysandbox}, we first describe how the API works. (We refer to~\cite{topics-github} for detailed specifications of the API.)
The purpose of the Topics API is to enable interest-based advertising without the use of third-party cookies in order to reduce the risk of cross-site tracking. 
To do so, the API introduces several privacy protections including limiting the interest analyzed to a coarse-grained curated list of interests and introducing noise.  

The API has two main processes (See Algorithm~\ref{alg:topics}). First, when a client visits a website that calls the API, that website is classified (on-device) into a taxonomy $\topics$ of topics, where $|\topics|=469$ in the v2 taxonomy\footnote{\url{https://raw.githubusercontent.com/patcg-individual-drafts/topics/refs/heads/main/taxonomy_v2.md}}. This allows the API to build a weekly {\it on-device} interest profile of the user. This computation is done at the end of week $i$, when the browser builds an interest profile $S_i$ for the user that will be used the next week. This profile contains the top five topics (from the taxonomy) of the websites visited by the user during the week $i$ that have called the API. $S_i$ is guaranteed to be of size $5$ by padding it, if necessary, with random topics from $\topics$ whenever there are fewer than 5 topics visited during a week. Importantly, this profile is kept on the browser and not shared with others, and is kept fixed until the update at the end of the subsequent week. 

Then, during week $i$, whenever a website calls the API, the browser selects a topic at random from the top topics in the profile $S_{i-1}$ of the user obtained in the previous week. This correct topic is released with probability $1-p$, while a uniformly at random topic among all $\topics$ topics is selected with  probability $p$ ($p=5\%$ in the current specifications). Notice that, for every user-website pair, the topic sampled for that website for the user is fixed the entire week (i.e., two visits of the user to the website will result in the same topic). Samples from the same user (within a week) on two sites are instead drawn independently. After the week is complete, the profile is updated again as described above.\footnote{We follow the methodology of Carey et al.~\cite{carey2023measuring} and omit two complicating details from the actual API. First, the API also returns the cached results  of the previous two weeks. Modeling the real output observed by an adtech on a site for $r$ weeks simply corresponds to $r+2$ weeks of observations in our model, so we omit this detail. Second, the API applies a filtering to the output to remove topics that were not observed by an adtech on any site. We make the more pessimistic assumption that no filtering ever happens, as this provides more information to an attacker of the API.}

\section{Statistics for trace modeling}\label{sec:stat}
Our goal is to develop a model of the Topics API output observed by an adtech (or third party present on a site such as adtechs) for a set of users, over a period of time. Consider the observation of the output of the API for user $u$ on website $w$, for $r$ weeks. We will number those weeks without loss of generality $1, 2, \ldots, r$. We call $O(u,w)=(o_1(u,w), \ldots, o_r(u,w))$  the {\it trace} of the output of the API for user $u$ on website $w$ for the weeks $1$ to $r$, (i.e., $o_i(u,w)$ is the output for week $i$ for the user). Notice that by the property of the API there is a fixed $o_i(u,w)$ observed by the site $w$ (and all third-party present on the site) during week $i$ for the user even if the user visits the site multiple times. When clear from context we will omit $u, w$ and call $O = (o_1, \ldots, o_r)$ the trace of the user. Notice that $O$ describes completely the Topic API outputs observed by the website for the user during the period of time. Our goal is to learn a privacy-preserving model of the distribution $D(u)$ over traces so that we can sample $O(u,w) \sim D(u)$ for users $u \in \users$. Notice, in fact, that for any set of sites $W$, the traces $O(u,w)$, for $w\in W$, are all sampled i.i.d. from the distribution $D(u)$.  So we will focus the rest of the discussion on how to model $D(u)$.

From the definition of the API, the distribution $D(u)$ is entirely described by the sequence $S(u) = (S_0(u), \ldots, S_{r-1}(u))$ of the $5$ top topics of the user $u$ for the weeks $0$ to $r-1$. In fact, we have $$\prob \{ o_i(u,w) = j\} = (1-p) /5 + p/|\topics|$$ if $j\in S_{i-1}(u)$ and otherwise
$$\prob \{ o_i(u,w) = j\} = p/|\topics|$$
So it is sufficient for us to only model the distribution of $S(u)=(S_1(u),\ldots, S_r(u))$, for a given user, in order to model the output of the API on arbitrary many sites. This will be the focus of all the modeling efforts of the paper. 

Unfortunately, exactly modeling this distribution $S(u)$ is infeasible for even small $r$, and even neglecting the privacy constraints. This is because the support of this distribution is the set of all sequences of length $r$ of $5$ topics out of $|\topics| = 469$ topics. This scales with  ${|\topics| \choose 5}^r$ which is too large to measure accurately, or just output, for even very small $r$ (for instance for $r=1$ it contains more than 100 billion parameters!) . Since our goal is to model the long term evolution of the traces of a user for several weeks we need to accept a less general model. 

For this reason, our approach is to impose a series of simplifying assumptions on the distribution that we can model allowing a computationally efficient and privacy-preserving model design. As we will see next, our approach will be to impose that the distribution of the data we can generate must approximately match, across a large number of statistics, the same statistics of the real distribution. Our ML model will learn a set of parameters to maximize the match over those statistics. The choices of the statistics, described in the next section, is inspired by identifying a small set of the salient parameters that fully characterize the evolution of the top topics of a user under assumptions of time-stationarity of the trace evolution.   

\subsection{Statistics definition}

We now define and justify the statistics we measure from real user data and that we use to fit a parametric distribution. As we discussed above, we model the top user topic sets $S_0, \ldots, S_{r-1}$ in weeks $0,\ldots, r-1$ as these uniquely determine the distribution over topic traces that would be observed by any website for that user in weeks $1,\ldots, r$.
In our model, we define the following statistics:
\begin{itemize}[leftmargin=*]
    \item For each topic $o \in \topics$, let
    \[
    \qsingle^*(o) = \prob_{U, J}\left (o \in S_{J}(U)\right),
    \]
    where $U$ is a randomly chosen user and $J$ is a randomly chosen week in $\{0, \ldots, r-1\}$.
    There are 469 such statistics.
    \item For each pair of distinct topics $o_1, o_2 \in \topics$, \[
    \qwithin^*(o_1, o_2) = \prob_{U, J}(\{o_1, o_2\} \subset S_{J}(U)),
    \]
    where $U$ is a randomly chosen user and $J$ is a randomly chosen week in $\{0, \ldots, r-1\}$.
    There are ${469 \choose 2} = 109,746$ such statistics.
    \item For each pair of (possibly equal) topics $o_1, o_2 \in \topics$,
    \[
    \qacross^*(o_1, o_2) = \prob_{U, J}\left (o_1 \in S_{J-1}(U) \wedge o_2 \in S_{J}(U)\right),
    \]
    where $U$ is a randomly chosen user and $J$ is a randomly chosen week in $\{1, \ldots, r-1\}$.
    There are $469^2 = 219,961$ such statistics.
\end{itemize}

Roughly speaking, $\qsingle^*$ and $\qwithin^*$ measure the popularity of topics and pairs of topics, while $\qacross^*$ measures how topics change over time.
Also, notice that, since the sets $S_j(u)$ are guaranteed to have $k$ topics, we can derive $\qsingle^*$ from $\qwithin^*$:
\[
 \qsingle^*(o_1) =  \sum_{o_2\neq o_1} \qwithin^*(o_1, o_2) / (k-1).
\]

Note that all three kinds of statistics are averaged over the training data weeks and are imposed on the model for all weeks simulated. This constitutes a stationarity assumption: the topic popularity and transition dynamics do not significantly change over the data weeks. In particular, when generating data for $r'$ weeks, we enforce the popularity statistics $\qsingle^*$ and $\qwithin*$ on every week, and the transition statistics $\qacross^*$ on every pair of consecutive weeks. The main advantage is that this allows us to generate synthetic topics data for many weeks without the need to obtain a collection of statistics that grows with the number of weeks. This is computationally efficient and enables better statistical accuracy at parity of privacy guarantee as it reduces the number of parameters that must be estimated privately.

\paragraph{Summary}
The simplifying assumptions described in this section greatly decrease the parameters needed to describe the distribution of the Topics API traces. The key enabling this simplification is a stationarity assumption postulating that the probability distribution of topics and their transition probabilities are stable over time. We test this assumption in Appendix~\ref{app:stationarity} where we show that all the statistics we compute on our training data have $\ge 99.8\%$ correlation coefficient with the same statistics computed on a different period of time from the validation data, confirming that the stationarity assumption holds with high accuracy.

While an arbitrary trace distribution over $r$ weeks would require ${|\topics| \choose 5}^r$ parameters to describe, thanks to this simplifying assumption our model require estimating only  $|\topics|^2 + {|\topics| \choose 2} \approx 329{,}000$ statistics $\qwithin^*, \qacross^*$ for arbitrary long traces. The models we fit to these statistics (described in Section~\ref{sec:model}) have approximately 4.7M parameters. 
As we discuss in the next section, these statistics can be estimated efficiently and accurately in a privacy-preserving way. 
Moreover, as we observe in our empirical analysis (see Section~\ref{sec:exp}) our modeling closely matches some important aspects of the real data behavior.
Before describing in Section~\ref{sec:model} how we learn a model to sample arbitrary long traces approximately matching the observed statistics we described how to estimate $\qwithin^*, \qacross^*$ in a privacy preserving way.

\subsection{Data collected and privacy-preserving data handling}
We now describe the data used to obtain the DP statistics. Our work is based on a Google proprietary log of de-identified Topics API outputs for Chrome's users. From this dataset we obtain for more than a hundred million users the top $k=5$ topics assigned to the user for a period of time. More precisely, we based our analysis on the latest $4$ weeks of data\footnote{In line with our data access and retention policy, user ids are removed from this data and access is restricted to the latest 28 days of logs. The data was collected for the $4$ weeks between 2024-12-16 and 2025-01-12.} to obtain the  sequences $S(u)=(S_0(u), S_1(u), S_2(u), S_3(u))$ associated to a set of users $u\in \users$ of users for $r=4$ weeks. The statistics used to build the model are based on analyzing a dataset $D$ containing the first two weeks of data $D=(S_0(u), S_1(u))_{u \in \users}$ while the full $4$ weeks are used in the re-identification risk measurement (see~\Cref{sec:exp}).

\paragraph{Differentially privacy}
We briefly introduce the framework of Differential Privacy which enables us to provide strong privacy protections for the statistics computed on this data.

Differential privacy (DP)~\cite{dwork2006calibrating} (refer to~\cite{dwork2014book} for an in-depth treatment of the area) has emerged as the gold standard for ensuring strong privacy protection, when  dealing with private data. Intuitively, differential privacy promises that the output of an algorithm remains approximately the same (in distribution), whether or not a particular individual users' data is included in the input of the algorithm. This guarantee provably limits the ability of an  adversary to accurately infer any sensitive information about a specific individual (including even whether the user is present in the data or not). More formally, our differential privacy protection can be stated in the following way.  
\begin{definition}[Neighboring Datasets]
We say that datasets $D,D' \in \mathcal{D}$ containing user's topics sequences are neighboring $(D \sim D')$, if one can be obtained by adding a single user's topics sequence to the other, i.e., 
$$D =\{(S_0(u), \ldots S_r(u)) \;|\; u \in \users\}\; \mathrm{and}$$
$$D' = D \cup \{(S_0(v), \ldots S_r(v))\}$$ for a user $v\notin \users$.
\end{definition}

\begin{definition}[Differential Privacy~\cite{dwork2014book}]
A randomized algorithm $\mathcal{M}$ is $(\epsilon, \delta)$-differentially private, or $(\epsilon, \delta)$-DP, if for any two neighboring datasets $D$ and $D'$ and for any possible (measurable) subset of outputs $\mathcal{O}$ of the algorithm,
\[
    \Pr(\mathcal{M}(D) \in \mathcal{O}) \leq \exp(\epsilon) \cdot \Pr(\mathcal{M}(D') \in \mathcal{O}) + \delta.
\]
\end{definition}

To achieve DP for the statistics computed we use the well-known Gaussian mechanism. Suppose the statistics computed by the algorithm are represented as a vector-valued function $f : \mathcal{D} \to \mathbb{R}^d$ where $d$ is the number of statistics computed. The sensitivity of this function is defined as 

\begin{definition}[$\ell_2$ sensitivity]
The $\ell_2$ sensitivity $\Delta_2)$ of the function $f$ is defined as the maximum of adjacent datasets $D,D'\in \mathcal{D}$ 
\[
    \Delta_2(f) := \max_{D \sim D'} ||f(D)-f(D')||_2 
\]
\end{definition}

\begin{proposition}[Gaussian Mechanism \cite{balle2018gaussianmechanism}]\label{prop:gaussian}
Let $f: \mathcal{D} \to \mathbb{R}^d$ be a function with $\ell_2$ sensitivity $\Delta_2(f)$. For any $\epsilon > 0$ and $\delta \in (0,1]$, the mechanism $M(x) = f(x) + Z$ with $Z \sim \mathcal{N}(0, \sigma^2 I_d)$ is $(\epsilon, \delta)$-DP if
\begin{equation*}
    \Phi\left(\frac{\Delta_2(f)}{2\sigma} - \frac{\epsilon \sigma}{\Delta_2(f)}\right) - \exp(\epsilon) \Phi\left(-\frac{\Delta_2(f)}{2\sigma} - \frac{\epsilon \sigma}{\Delta_2(f)}\right) \leq \delta.
\end{equation*}
where $\Phi: \mathbb{R} \to \mathbb{R}$ be the standard Gaussian cumulative density function, and $I_d$ is the identity matrix of size $d$.
\end{proposition}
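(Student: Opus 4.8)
The plan is to reduce the $d$-dimensional claim to a one-dimensional comparison of two shifted Gaussians and then read the stated quantity off the optimal hypothesis test between them. Fix a pair of neighboring datasets $D \sim D'$, let $v = f(D') - f(D)$ and $\mu = \|v\|_2$, and let $P$ and $Q$ denote the distributions of $M(D) = f(D) + Z$ and $M(D') = f(D') + Z$, respectively. The first step is a dimension reduction: because $Z \sim \mathcal{N}(0, \sigma^2 I_d)$ is rotationally invariant, I would rotate and translate coordinates so that $f(D)$ sits at the origin and $v$ points along the first axis. The log-likelihood ratio $\log(\mathrm{d}P/\mathrm{d}Q)(x)$ then depends on $x$ only through its first coordinate, since the other $d-1$ coordinates contribute identically to both densities and cancel. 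The whole problem therefore collapses to the univariate comparison of $P = \mathcal{N}(0, \sigma^2)$ against $Q = \mathcal{N}(\mu, \sigma^2)$ on $\reals$.

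Next I would use the tight characterization of $(\epsilon,\delta)$-DP by the optimal rejection region. The condition $\prob(M(D) \in \mathcal{O}) \le e^\epsilon \prob(M(D') \in \mathcal{O}) + \delta$ for all $\mathcal{O}$ is equivalent to $\delta \ge \sup_{S}\big(P(S) - e^\epsilon Q(S)\big)$, and the supremum is attained on $S^\star = \{x : L(x) > \epsilon\}$, where $L(x) = \log(\mathrm{d}P/\mathrm{d}Q)(x)$ is the privacy loss. For the two shifted Gaussians, $L(x) = (\mu^2 - 2\mu x)/(2\sigma^2)$ is affine and strictly decreasing in $x$, so $S^\star$ is the half-line $\{x < t\}$ with $t = \mu/2 - \sigma^2\epsilon/\mu$. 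Evaluating the two Gaussian tails gives $P(S^\star) = \Phi(t/\sigma) = \Phi\!\big(\tfrac{\mu}{2\sigma} - \tfrac{\sigma\epsilon}{\mu}\big)$ and $Q(S^\star) = \Phi((t-\mu)/\sigma) = \Phi\!\big({-}\tfrac{\mu}{2\sigma} - \tfrac{\sigma\epsilon}{\mu}\big)$, so the least $\delta$ admissible for separation $\mu$ is
\[
\delta(\mu) = \Phi\!\Big(\tfrac{\mu}{2\sigma} - \tfrac{\sigma\epsilon}{\mu}\Big) - e^\epsilon\,\Phi\!\Big({-}\tfrac{\mu}{2\sigma} - \tfrac{\sigma\epsilon}{\mu}\Big).
\]
The reverse inequality (with $D$ and $D'$ swapped) reduces to the same $\delta(\mu)$ by the reflection symmetry $x \mapsto \mu - x$ of the Gaussian pair, so it suffices to control this single expression.

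Finally, to turn this per-pair value into a uniform guarantee I would set $\mu = \Delta_2(f)$, which is legitimate only if $\delta(\mu)$ is nondecreasing in $\mu$, so that the worst neighboring pair is the one with the largest separation $\|v\|_2 = \Delta_2(f)$. Establishing this monotonicity is the step I expect to be the main obstacle, since $\mu$ enters both $\Phi$-arguments with competing signs. Writing $a(\mu)$ and $b(\mu)$ for the two arguments above, the resolution is a clean cancellation: one checks $a^2 - b^2 = (a-b)(a+b) = (\mu/\sigma)(-2\sigma\epsilon/\mu) = -2\epsilon$, whence the standard normal density $\phi := \Phi'$ satisfies $\phi(a) = e^\epsilon \phi(b)$. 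Differentiating $\delta$ and substituting this identity makes the two density terms coincide, leaving $\delta'(\mu) = \phi(a)\,(a' - b') = \phi(a)/\sigma > 0$ because $a - b = \mu/\sigma$. Hence $\delta(\mu) \le \delta(\Delta_2(f))$ for every neighboring pair, and the hypothesis $\delta(\Delta_2(f)) \le \delta$ of the proposition certifies $(\epsilon,\delta)$-DP.
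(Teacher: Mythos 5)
Your proof is correct, but note that the paper itself offers no proof of this proposition: it is imported verbatim, with a citation, from Balle and Wang's analytic Gaussian mechanism paper \cite{balle2018gaussianmechanism}, and the paper's appendix only proves Lemma~\ref{lemma:sensitivity} and Lemma~\ref{lem:differentiable}. So the comparison is really with the cited source, and there your argument is essentially the canonical one: reduce to one dimension by rotational invariance, characterize $(\epsilon,\delta)$-DP via $\sup_S\bigl(P(S) - e^\epsilon Q(S)\bigr)$ attained on the likelihood-ratio set $\{L > \epsilon\}$, evaluate the two Gaussian tails at the threshold $t = \mu/2 - \sigma^2\epsilon/\mu$, and handle the reverse direction by the reflection $x \mapsto \mu - x$. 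All of these steps check out: the privacy-loss computation, the identity $a^2 - b^2 = -2\epsilon$ giving $\phi(a) = e^\epsilon\phi(b)$, and the resulting $\delta'(\mu) = \phi(a)/\sigma > 0$ are each correct, and the monotonicity step is exactly the part that justifies plugging in $\mu = \Delta_2(f)$ as the worst case --- a point that is frequently glossed over in informal treatments, so it is good that you proved it rather than asserted it. The only cosmetic omission is the degenerate case $\mu = 0$ (neighboring datasets with $f(D) = f(D')$), where your formula is singular but the DP inequality holds trivially since $P = Q$; one sentence would close that. In short: a complete and correct proof of a statement the paper chose to cite rather than prove.
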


\Cref{prop:gaussian} allows setting a $\sigma$ parameter depending on the sensitivity and desired $(\epsilon, \delta)$-DP guarantees for the Gaussian noise to be added to the statistics.

For our statistics computation we used a dataset $D$ containing the first two weeks of data, over which we define the following three functions: 
\begin{enumerate}
\item $f_{11}(D)[o_i,o_j]$, for $o_i, o_j \in {\topics \choose 2}$, is the number of users with both topic $o_i$ and $o_j$ in the first week;
\item $f_{22}(D)[o_i,o_j]$ is the number of users with both topics in week two;
\item $f_{12}(D)[o_i,o_j]$, for $o_i, o_j \in {\topics ^2}$, is the number of users with topic $o_i$ in week 1 and topic $o_j$ in week 2.  
\end{enumerate}

It is easy to prove the following result: 
\begin{restatable}{lemma}{lemSensitivity}\label{lemma:sensitivity}
The $\ell_2$ sensitivity of the functions defined above is $\Delta_2(f_{11}) = \Delta_2(f_{22}) = \sqrt{{k \choose 2}}$ and $\Delta_2(f_{12}) = k$
\end{restatable}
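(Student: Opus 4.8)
The plan is to use the fact that neighboring datasets differ by exactly one user, so that the entire difference vector $f(D')-f(D)$ is determined by the single added user $v$ and consists only of the coordinates that $v$ contributes to. The key structural input is the padding guarantee of the API: every weekly profile $S_i(v)$ has \emph{exactly} $k=5$ topics. This makes the number of affected coordinates the same for every possible added user, so the maximum over neighbors is attained trivially and equals the per-user contribution. First I would fix a neighboring pair with $D' = D \cup \{(S_0(v),\ldots,S_r(v))\}$ and observe that each counting function increments only those coordinates indexed by topic combinations that actually appear in $v$'s profiles, each by exactly $1$; hence $\|f(D')-f(D)\|_2$ is simply the square root of the number of such incremented coordinates.

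For $f_{11}$, the coordinates that change when adding $v$ are exactly the pairs $\{o_i,o_j\}$ of distinct topics with both $o_i,o_j \in S_0(v)$. Since $|S_0(v)| = k$, the number of such pairs is ${k \choose 2}$, and each corresponding entry increases by $1$, so the difference vector has ${k \choose 2}$ entries equal to $1$ and all others $0$, giving $\|f_{11}(D')-f_{11}(D)\|_2 = \sqrt{{k \choose 2}}$. The argument for $f_{22}$ is identical, using $S_1(v)$ in place of $S_0(v)$, yielding $\Delta_2(f_{22}) = \sqrt{{k \choose 2}}$. For $f_{12}$, the changed coordinates are the ordered pairs $(o_i,o_j) \in \topics^2$ with $o_i \in S_0(v)$ (week one) and $o_j \in S_1(v)$ (week two); here the diagonal $o_i = o_j$ is permitted, consistent with the $\topics^2$ indexing. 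There are $|S_0(v)|\cdot|S_1(v)| = k\cdot k = k^2$ such pairs, each contributing an increment of $1$, so $\|f_{12}(D')-f_{12}(D)\|_2 = \sqrt{k^2} = k$, i.e.\ $\Delta_2(f_{12}) = k$.

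Finally I would note that because every $S_i(v)$ has size exactly $k$, the counts ${k \choose 2}$ and $k^2$ do not depend on which user $v$ is added, so each value above is simultaneously an upper bound and is achieved, making it the exact sensitivity. There is no real obstacle here; the only points requiring care are bookkeeping ones: confirming that a single added user increments each relevant coordinate by exactly one (never more, since a topic appears at most once in a given profile set), correctly matching the index set in each case (unordered distinct pairs for $f_{11},f_{22}$ versus ordered pairs including the diagonal for $f_{12}$), and invoking the exact-size-$k$ padding property so that the per-user contribution is constant and the maximum over neighbors is immediate.
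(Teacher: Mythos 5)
Your proof is correct and follows essentially the same approach as the paper's: adding one user changes exactly ${k \choose 2}$ coordinates of $f_{11}$ (and $f_{22}$) and $k^2$ coordinates of $f_{12}$, each by $1$, giving $\ell_2$ norms $\sqrt{{k \choose 2}}$ and $k$ respectively. Your additional remarks---that the exact-size-$k$ padding makes the per-user contribution uniform over all neighbors, and that increments are by exactly one since topics appear at most once per profile---are careful bookkeeping points the paper leaves implicit, but the argument is the same.
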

The proof is available in the appendix.

\paragraph{Implementation details}
In our data release we computed the statistics defined in the vector $f$ above using a Google DP pipeline implementing the Gaussian mechanism~\cite{amin2022plume}
We used a total privacy budget of $\epsilon=\log(3)$, $\delta= 10^{-15}$ in line with strict privacy protection standards. Internally the count vectors in $f_{11}$ and $f_{22}$ received $25\%$ of this budget while the counts $f_{22}$ received $50\%$. Notice that from simple post-processing of the DP counts thus obtained (by normalization) we can obtain the the statistics $\qsingle^*, \qwithin^*, \qacross^*$. Specifically, we estimate $ \qwithin^*$ averaging (and normalizing) the count vectors $f_{11}, f_{22}$ while $ \qacross^*$ comes from normalizing the counts $f_{12}$ (and $\qsingle^*$ is induced by $ \qwithin^*$ as described above).

\paragraph{Description of the statistics obtained}
We now briefly describe the DP statistics $\qsingle^*, \qwithin^*, \qwithin^*$ obtained. The full DP statistics tables are provided as part of our data release. Due to space limitations, we only present some basic statistics. In \Cref{fig:stats-top}, we visualize the top topics for the $\qsingle^*$. In  Appendix~\ref{app:additional-plots} we provide the top entries in $\qwithin^*$ and $\qacross^*$. It is possible to observe that \textsc{/News}, \textsc{/Arts \& Entertainment}, and \textsc{/Shopping} are the top topics observed. For the top pairs of topics in the $\qwithin^*$ and $\qacross^*$ we observe that the distribution comes mostly from those same top topics. More details are provided in \Cref{app:additional-plots}.

\begin{figure}[ht]
\centering
    \begin{tabular}{lr}
    \toprule
    Name & Rate \\
    \midrule
    /News & 34.60\% \\
    /Arts \& Entertainment & 24.52\% \\
    /Shopping & 23.34\% \\
    /Sports & 18.31\% \\
    /Games/Computer \& Video Games & 17.70\% \\
    /Internet \& Telecom & 12.97\% \\
    /Computers \& Electronics/Software & 11.82\% \\
    /Computers \& Electronics & 10.46\% \\
    /Food \& Drink/Cooking \& Recipes & 9.72\% \\
    /Arts \& Entertainment/TV \& Video & 8.02\% \\
    \bottomrule
    \end{tabular}
    \caption{Top topics distribution as represented by the top DP statistics in $\qsingle^*$.\label{fig:stats-top}}
\end{figure}

\section{Modeling traces}\label{sec:model}

In this section, we design a model for sequences of top topic sets that is consistent with the statistics collected in \Cref{sec:stat}.
Our strategy is inspired by the Relaxed Adaptive Projection mechanism for private synthetic data~\cite{RAP2021}, and by the dataset reconstruction attack methodology of Dick et al~\cite{Census2023}.
The high level idea is as follows: we design a parameterized distribution over topic set sequences such that all of the statistics from \Cref{sec:stat} are differentiable functions of the model parameters.
Then we use gradient-based optimization techniques to search for distribution parameters such that the statistics for the parameterized distribution closely match those privately collected from real user topic set sequences.
Given the fit distribution, we generate synthetic topic set sequences by sampling users i.i.d. from it.
Due to the post-processing guarantees of Differential Privacy, the fit distribution and sampled synthetic user topic traces can be released while satisfying differential privacy.

We begin by describing the parametric form of our topic set sequence distribution.
The most basic component is a \emph{slot}, which is a distribution over a single topic.
Each slot's topic distribution is parameterized by one weight/logit for each topic, which is converted to a topic distribution by the softmax function.
In other words, each slot is a distribution over topics encoded by a logit vector.
Next, a \emph{type} represents a simple distribution over topic set sequences.
If $r$ is the number of weeks and $k$ is the number of top topics per week (generally $k = 5$), then each type contains $rk$ slots.
The first $k$ slots correspond to the topics in the first week, the next $k$ slots correspond to the second week, and so on.
To sample a topic set sequence from a single type, we sample topics for each slot independently and take the topic set for each week to be the set of unique topics among that week's slots.
Note that it is possible for this sampling procedure to produce topic sets that are smaller than $k$. 
As in the real Topic API algorithm, when measuring the re-identification risk, any user topic sets that are smaller than size $k$ are padded with random topics.
Finally, our modeled distribution over topic set sequences is a uniform mixture of many types.
To sample a topic set sequence from the mixture, we pick a type uniformly at random and sample a topic set sequence from that type.
\Cref{fig:typesAndSlots} depicts a schematic representation of the model.

\begin{figure}
    \centering
    \includegraphics[width=0.7\columnwidth]{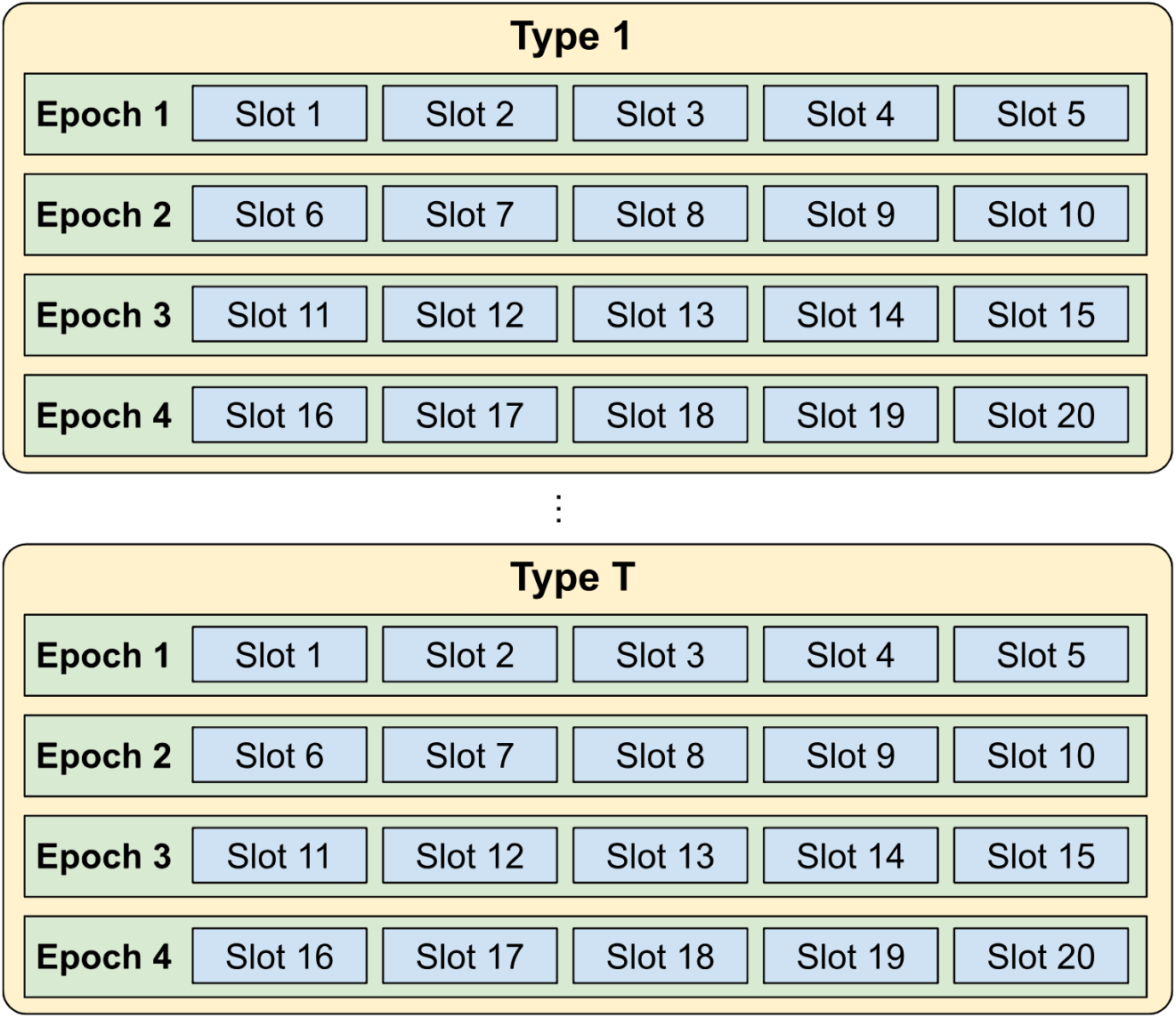}
    \caption{The modeled topic trace distribution is a uniform mixture of $T$ types.
    The topic set for week $e$ in type $t$ is drawn from a collection of $k$ slots.
    Each slot is a distribution over a single topic encoded by a logit vector in $\reals^{|\topics|}$.
    To sample a topic trace, pick a type uniformly at random, sample topics for each of its slots, and then compute the set of unique topics in each week's slots.}
    \label{fig:typesAndSlots}
\end{figure}

Formally, the distribution is parameterized by $\theta \in \reals^{T \times r \times k \times |\topics|}$, where $T$ is the number of types, $r$ is the number of weeks, $k$ is the number of topics per week, and $|\topics|$ is the number of topics in the topic taxonomy.
The vector $\theta[t,i,s,:] = \langle \theta_{t,i,s,1}, \ldots, \theta_{t,i,s,|\topics|} \rangle$ contains the logits for slot $s$ of week $i$ in type $t$.
For simplicity, we assume that the topic ids are the consecutive integers $1$ through $|\topics|$.
Pseudocode for sampling a topic set sequence is given in \Cref{alg:sampling}.

\begin{algorithm}
\noindent\textbf{Input:} Distribution parameters $\theta \in \reals^{T\times r \times k \times |\topics|}$.
\begin{enumerate}
\item Choose type $I$ uniformly at random from $[T]$.
\item Let $p_{i,s} = \operatorname{softmax}(\theta[I,i,s,:])$ for each $i \in [r], s \in [k]$ be the topic probability vector for slot $s$ in week $i$.
\item Sample a topic $x_{i,s} \in \topics$ from the probability vector $p_{i,s}$ for each $i \in [r], s \in [k]$.
\item Let $S_i = \{x_{i,1}, \ldots, x_{i, k}\}$ be the set of unique topics sampled from the slots for each week $i \in [r]$.
\item Return the topic set sequence $(S_i)_{i \in [r]}$.
\end{enumerate}
\caption{Topic Set Sequence Sampling}
\label{alg:sampling}
\end{algorithm}

The model described above has a number of desirable properties.
First, when the number of types $T$ is sufficiently large, this model can fit any statistics computed from a sample of user topic traces.
This is because we can create one type for each user that produces that user's topic trace with probability one.
Then the distribution encoded by the model is exactly the empirical distribution of the sample and statistics computed from the model perfectly match those of the data sample.
However, in most cases we do not need one type per user, and each type should be thought of as a distribution over topics that are typically observed together.
For example, even though we use a dataset containing approximately 100 million users, we are able to achieve a very good fit using $T=500$ types. Second, even though the model is expressive enough to fit the statistics we collect, it is not obvious how to find the optimal parameters.
As we will see in \Cref{sec:optimizing}, each of the statistics from \Cref{sec:stat} can be written as a differentiable function of the model's parameters $\theta$.
We can therefore deploy the extensive machinery of gradient-based optimization to search for model parameters that closely agree with those statistics.

\begin{figure*}[ht!]
    \subcaptionbox{Absolute error percent \label{fig:abs-err}}{
    \includegraphics[width=0.48\textwidth]{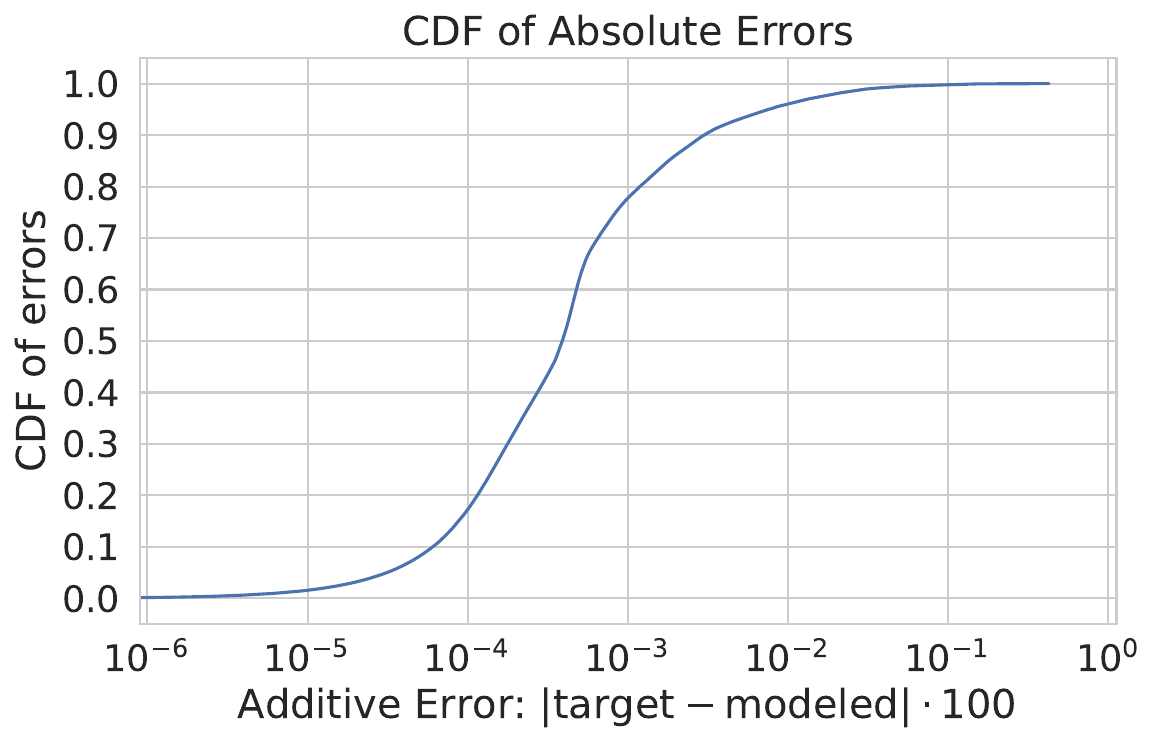}}
    \hfill
    \subcaptionbox{Relative error\label{fig:abs-rel}}{\includegraphics[width=0.48\textwidth]{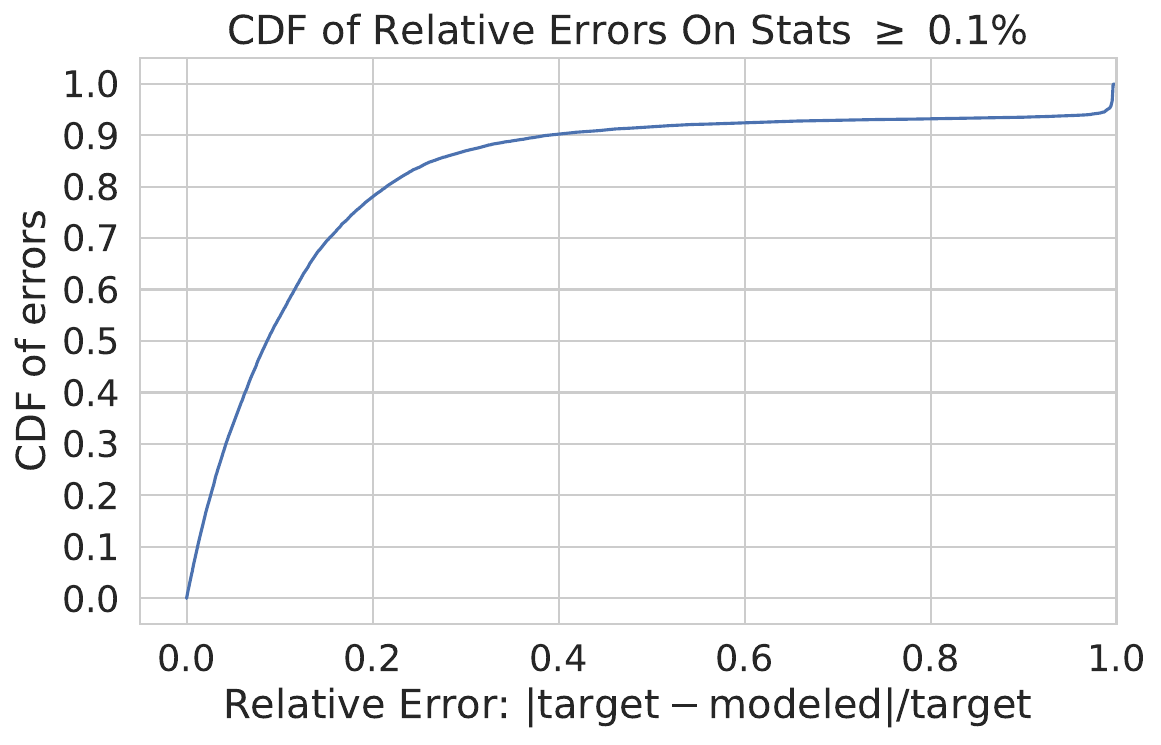}
    }
    \caption{
    Cumulative distribution of absolute and relative errors over the terms of $J(\theta)$.
    Each plot shows the fraction of the terms of $J$ with absolute or relative error below each threshold.
    The relative error plot is restricted to target statistics whose value is at least 0.01\%. 
    }
    \label{fig:absolteAndRelativeErrors} 
\end{figure*}

\begin{figure*}
    \centering
    \subcaptionbox{Signed Error\label{fig:error-pdf-absolute}}{
   \includegraphics[width=0.48\textwidth]{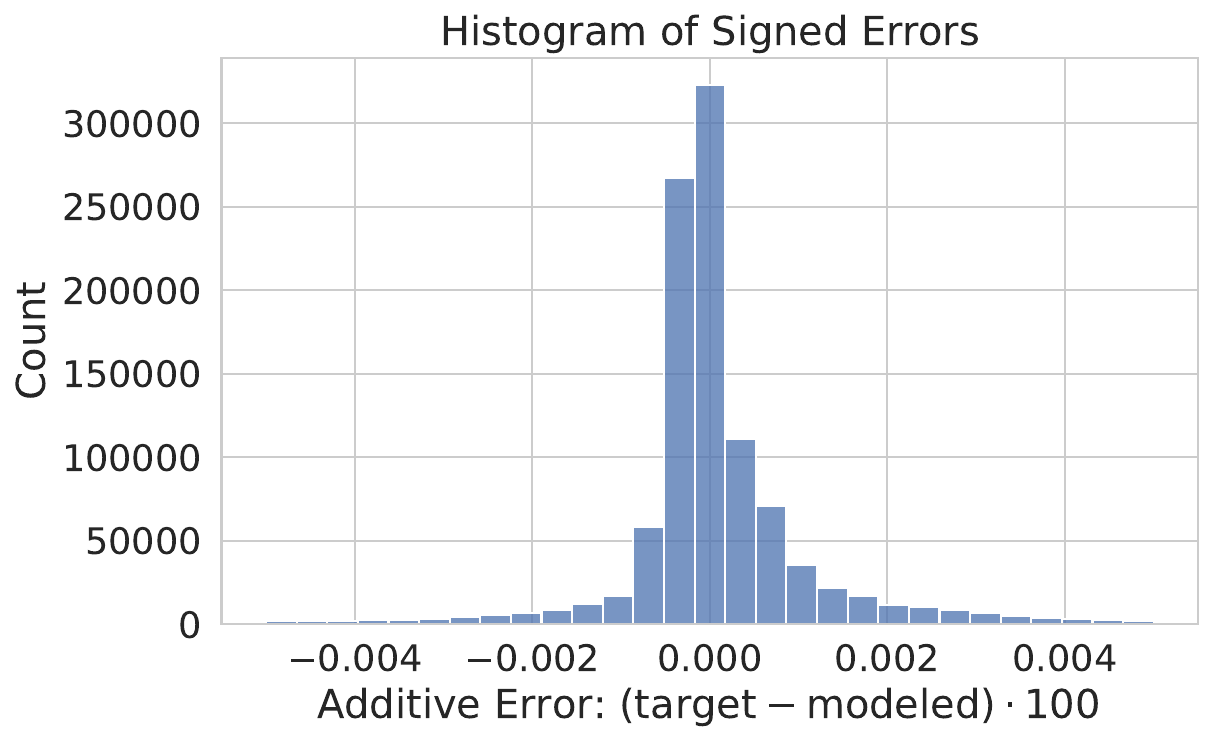}
    }
    \hfill
    \subcaptionbox{Signed Relative Error\label{fig:error-pdf-relative}}{
    \includegraphics[width=0.48\textwidth]{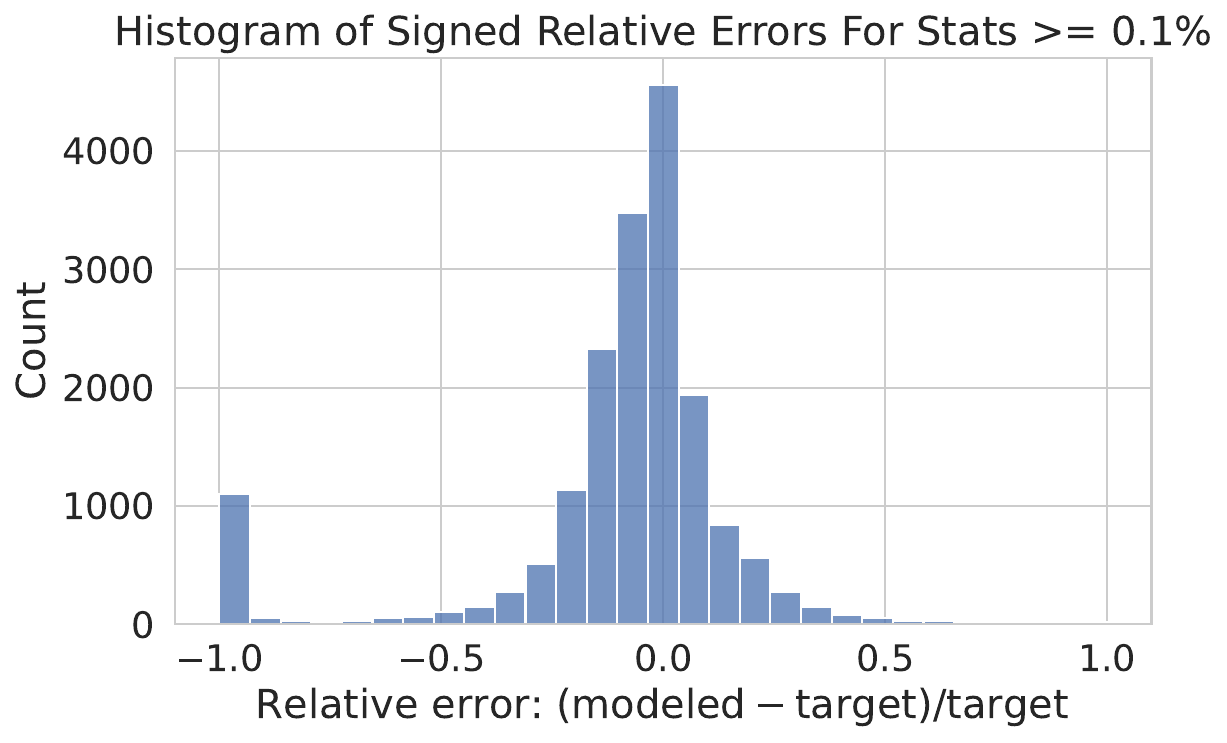}
    }
        \caption{Histograms of the absolute and relative errors over the terms of $J(\theta)$. Notice that the x-axis in Figure~\ref{fig:error-pdf-absolute} is clipped between $[-0.005,0.005]$ to increase clarity, see Figure~\ref{fig:abs-err} for the entire distribution.} 
    \label{fig:errorHistograms}
\end{figure*}

\subsection{Optimizing Model Parameters}
\label{sec:optimizing}

In this section we describe our approach to fitting the parameters of the topic set sequence model described in \Cref{sec:model}.
We define a differentiable objective function of the model parameters such that achieving an objective value of 0 corresponds to perfectly matching the DP statistics computed in \Cref{sec:stat}.
Then, this objective function is optimized using Adam~\cite{kingma2015adam}, which is an adaptive learning rate optimization algorithm originally designed for training deep neural networks.

\begin{restatable}{lemma}{lemDifferentiable}\label{lem:differentiable}
    Let $(S_i)_{i \in [r]}$ be a topic trace sampled from the model with parameters $\theta$.
    There exist functions $\qsingle$, $\qwithin$, and $\qacross$ that are differentiable with respect to $\theta \in \reals^{T \times r \times k \times |\topics|}$ such that
    \begin{enumerate}
        \item For every week $i \in [r]$ and topic $o \in \topics$, we have 
        \[
            \prob(o \in S_i) = \qsingle(\theta; i, o).
        \]
        \item For every week $i \in [r]$ and distinct topics $o_1, o_2 \in \topics$, we have
        \[
        \prob(\{o_1, o_2\} \subset S_i) = \qwithin(\theta; i, o_1, o_2).
        \]
        \item For distinct weeks $i_1, i_2 \in [r]$ and any topics $o_1, o_2 \in \topics$, we have
        \[
        \prob(o_1 \in S_{i_1} \wedge o_2 \in S_{i_2}) = \qacross(\theta; i_1, i_2, o_1, o_2).
        \]
    \end{enumerate}
\end{restatable}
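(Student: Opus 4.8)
The plan is to express each of the three probabilities as an explicit closed form in terms of the slot probability vectors $p_{t,i,s} = \operatorname{softmax}(\theta[t,i,s,:])$, and then argue differentiability by inspection, since the softmax is smooth and the closed forms will be finite sums and products of softmax outputs. The key structural observation is that, conditioned on a fixed type $t$, the $rk$ slots are sampled independently (step 3 of \Cref{alg:sampling}), and that $o \in S_i$ is the event that at least one of the $k$ slots of week $i$ in type $t$ realizes topic $o$. The unconditional probability is then a uniform average over the $T$ types, reflecting the mixture structure, so throughout I will use the identity $\prob(E) = \frac{1}{T}\sum_{t=1}^{T} \prob(E \mid I = t)$ for each event $E$ of interest.

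First I would handle $\qsingle$. Fix a week $i$ and topic $o$. Conditioned on type $t$, the complementary event is that none of the $k$ slots of week $i$ produces $o$, and by slot independence this probability is $\prod_{s=1}^{k}(1 - p_{t,i,s}[o])$. Hence I would define
\[
\qsingle(\theta; i, o) = \frac{1}{T}\sum_{t=1}^{T}\left(1 - \prod_{s=1}^{k}\bigl(1 - p_{t,i,s}[o]\bigr)\right),
\]
and verify this equals $\prob(o \in S_i)$. Second, for $\qwithin$ with distinct topics $o_1, o_2$ in week $i$, I would use inclusion--exclusion on the complementary events: conditioned on type $t$, the probability that $\{o_1,o_2\}\subset S_i$ equals $1$ minus the probability that $o_1$ is missing, minus the probability that $o_2$ is missing, plus the probability that both are missing. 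The ``both missing'' term is $\prod_{s}(1 - p_{t,i,s}[o_1] - p_{t,i,s}[o_2])$, which is valid because $o_1 \ne o_2$ means the two topics are distinct outcomes within each slot's categorical distribution, so each slot independently avoids both with probability $1 - p_{t,i,s}[o_1] - p_{t,i,s}[o_2]$. Averaging over types gives the differentiable $\qwithin$.

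Third, for $\qacross$ with distinct weeks $i_1, i_2$ and arbitrary (possibly equal) topics $o_1, o_2$: the crucial point here is that slots in different weeks are independent even within the same type, so conditioned on type $t$ the events $\{o_1 \in S_{i_1}\}$ and $\{o_2 \in S_{i_2}\}$ are \emph{conditionally independent}. Therefore the conditional joint probability factors as the product of two single-topic marginals of the form derived for $\qsingle$, namely $\bigl(1 - \prod_{s}(1 - p_{t,i_1,s}[o_1])\bigr)\bigl(1 - \prod_{s}(1 - p_{t,i_2,s}[o_2])\bigr)$, and I would average this product over types. This is also why $o_1 = o_2$ causes no difficulty, in contrast to the within-week case: the two events live in disjoint slot collections, so no correction is needed. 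In each case differentiability with respect to $\theta$ follows immediately because $p_{t,i,s}[o]$ is a smooth (softmax) function of the logits $\theta[t,i,s,:]$, and the closed forms are polynomials in these smooth quantities.

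I do not anticipate a serious obstacle; the argument is essentially a careful bookkeeping of independence structure. The one place that demands care is the within-week pairwise term, where the two events $\{o_1 \in S_i\}$ and $\{o_2 \in S_i\}$ are \emph{not} independent (they share the same $k$ slots), so one cannot simply multiply marginals and must instead track the joint ``both absent'' event correctly via inclusion--exclusion. The subtle but essential fact making the joint-absence term clean is that $o_1$ and $o_2$ are distinct, so within a single slot they are mutually exclusive outcomes and the slot avoids both with probability $1 - p_{t,i,s}[o_1] - p_{t,i,s}[o_2]$ rather than a product; verifying this and confirming the quantity stays in $[0,1]$ is the one genuinely non-routine check.
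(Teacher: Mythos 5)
Your proposal is correct and follows essentially the same route as the paper's own proof: a uniform average over types, slot independence within a type giving product formulas for ``topic absent'' events, inclusion--exclusion for the within-week pair, conditional independence across weeks for $\qacross$, and differentiability inherited from the softmax. The only cosmetic difference is that the paper packages the ``no topic from set $A$ appears'' probability as a single helper function $N(\theta;t,i,A)$ and reuses it in all three cases, whereas you write the products out explicitly; the mathematical content is identical.
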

The proof of \Cref{lem:differentiable} essentially uses the inclusion-exclusion principle to write each of the statistics as a polynomial function of $\theta$. The detailed proof is given in \Cref{app:omitted-proofs}.

With this, we are ready to define the optimization problem we solve in order to fit the parameters $\theta$ to the collection of differentially private statistics obtained in \Cref{sec:stat}:
\begin{align*}
    J(\theta) 
    &= 
    \frac{1}{N}
    \sum_{i=1}^r
    \sum_{o=1}^{|\topics|}
    \bigl(
        \qsingle(\theta; i, o) - \qsingle^*(o)
    \bigr)^2 \\
    &+
    \frac{1}{N}
    \sum_{i=1}^r
    \sum_{o_1 = 1}^{|\topics|}
    \sum_{o_2 = o_1+1}^{|\topics|}
    \bigl(
        \qwithin(\theta; i, o_1, o_2) - \qwithin^*(o_1, o_2)
    \bigr)^2 \\
    &+
    \frac{1}{N}    
    \sum_{i=1}^{r-1}
    \sum_{o_1=1}^{|\topics|}
    \sum_{o_2=1}^{|\topics|}
    \bigl(
        \qacross(\theta; i, i+1, o_1, o_2) - \qacross^*(o_1, o_2)
    \bigr)^2,
\end{align*}
where $N = r|\topics| + r{|\topics| \choose 2} + (r-1)|\topics|^2$ is the total number of terms in the objective.

To optimize $J(\theta)$, we perform mini-batch gradient descent where each minibatch corresponds to a subset of the terms of $J(\theta)$.
On each minibatch, we compute the gradient of the selected terms with respect to the model parameters and use Adam to update the parameters.
In our experiments, we use the following optimization hyperparameters: we use 500 types, batches of 8192 terms, an Adam learning rate of 1.0.
We initialize all parameters from a Gaussian distribution with mean zero and standard deviation 0.001 and use 8000 training epochs / data passes. 
The results are stable to changes in the parameters.

\section{Empirical analysis}\label{sec:exp}
\label{sec:dataValidation}
In this section, we evaluate empirically the quality of the synthetic data generated by our model and released publicly~\cite{datarelease}.

\begin{figure}
    \centering
    \includegraphics[width=0.7\textwidth]{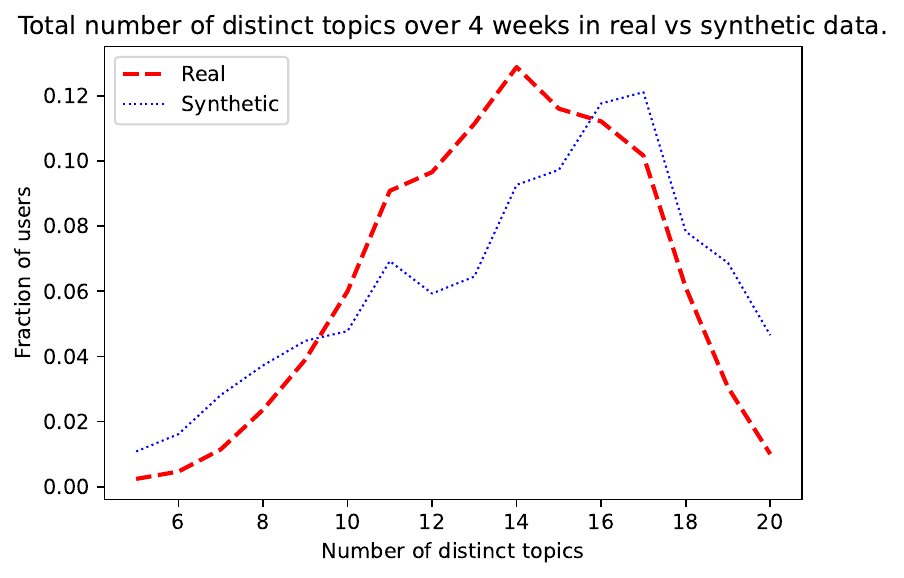}
    \caption{Fraction of users with a given number of distinct topics in their sets over 4 weeks. Notice the strong correlation between the distribution of the real and synthetic data, despite the fact that this statistic is not an input to the model.}
    \label{fig:distinct-topics}
\end{figure}

\paragraph{Data validation}
First, we verify that the model obtained has an output distribution that closely matches the statistics of interest that we have defined in the previous sections. In~\Cref{fig:absolteAndRelativeErrors}, we show the cumulative distribution of the absolute and relative errors of the statistics for a training run of the model. \Cref{fig:abs-err} represents the absolute error in percent points while \Cref{fig:abs-rel} represents the relative error between the statistics computed on the model and on the real data.  
Notice that virtually all statistics have an absolute error lower than $0.1\%$.
We report the relative error in \Cref{fig:abs-rel} for all statistics above $0.1\%$. Notice that the relative error is $\le 20
\%$ for $75\%$ of statistics. This confirms the convergence of our model to an output distribution close to the desired one. 
To gain a better understanding of the error distribution, in~\Cref{fig:errorHistograms}, we present the probability density function. \Cref{fig:error-pdf-absolute} reports the PDF of the absolute error in percent, while~\Cref{fig:error-pdf-relative} reports the PDF of the relative error for statistics with $\ge 0.1\%$ value. Notice how both the absolute and relative error have a large mass centered around $0$ showing that the distribution mean matches the real data.

\paragraph{Additional validation of non-constrained statistics}
So far we tested that our model is able to capture reasonably well the statistics imposed as constraints. In~\Cref{fig:distinct-topics}, instead, we report the distribution of a statistic not explicitly constrained by the model---the number of distinct topics in a user's sets over 4 weeks---on real and synthetic data. Notice the strong correlation between the distribution of the real and synthetic data ($89\%$ Pearson correlation coefficient). This is particularly interesting because this statistic is not part of the inputs of the model, this shows that our model is able to capture reasonably well additional properties of the data beyond the ones explicitly imposed in the training process.

\subsection{Re-identification analysis}
\label{sec:re-id-risk}
In this section, we present the results for the re-identification risk analysis on our synthetic data and show how it represents well the re-identification risk measured on real data. Before presenting the empirical results we define the approach used for defining and measuring re-identification risk.

\subsubsection{Methodology}
In order to show the replicability of prior work using our synthetic data, we follow the same set up of Carey et al.~\cite{carey2023measuring}. We sketch briefly the random-user model for re-identification risk introduced by Carey et al.~\cite{carey2023measuring} and refer to the paper for more details. 
In this model, an attacker observes the Topics API outputs for a set $\users$ of users on a website $w_1$ for a period of $r$ weeks. The adversary has hence a table $$T=\{ (o_1(u, w_1), \ldots o_r(u, w_1)) | u \in \users\},$$ containing the topics of the users for $r$ weeks on the site. Then, a random user $U^*$ (unknown to the attacker) is sampled uniformly from $\users$ and the Topics API output trace $o = (o_1(U^*, w_2), \ldots o_r(U^*, w_2))$ for that user, observed on a different website $w_2$ during the $r$ weeks, is revealed to the adversary. The adversary has to output a prediction $I(T,o) \in \users$ for the identity of the user $u\in \users$ that produced the trace $o$. The re-identification risk is measured as the probability---over the random sampled user $U^*$, randomness of the API and of the attacker---that the  prediction of the attacker is correct, i.e. $$\Pr_{U^*,T,o} (I(T,o)=U^*).$$

The attacker function can be an arbitrary method that attempts to infer the identity of the user from the data. In this work, for replicability, we use two attacks from~\cite{carey2023measuring} that are available\footnote{\url{https://github.com/google-research/google-research/tree/master/re_identification_risk}} as open source code. We now present them.

\paragraph{Hamming Attack}
Given the sampled trace $$o = (o_1(U^*, w_2), \ldots,o_r((U^*, w_2))$$ and the table $$T=\{ (o_1(u, w_1), \ldots o_r(u, w_1)) | u \in \users\},$$ the simple Hamming Attack computes the Hamming distance between the sequence $o$ and any sequence $(o_1(u,w_1),$ $\ldots, o_1(u,w_1))$ from $T$. The Hamming distance between the sequence of $u$ in $w_1$ and sequence $o$ is defined as the number of weeks $i$ such that $o_i(u,w_1) \neq o_i(U^*, w_2)$ (i.e., the number of weeks where output topics are different in the two sites). The simple (unweighted) Hamming Attack predicts the identity of the user generating $o$ as the user in $T$ with the trace with the smallest Hamming distance to $o$ (ties broken randomly).

\subsubsection{Asymmetric Hamming Attack}
The Asymmetric Hamming Attack is a more advanced variant of the (unweighted) Hamming Attack and is based on modeling the joint probability of the outputs on two sites~\cite{carey2023measuring}. This attack predicts as the user producing $o$, the user in $T$ with the trace with the smallest (weighted) Hamming distance to $o$ where the weighted distance employs asymmetric weights $w_{o_1, o_2}$ for $o_1,o_2\in \topics$ that are learned from the data to maximize the probability of correctly identifying the user. We refer to~\cite{carey2023measuring} for more details.

\begin{figure}[t!]
    \centering
    \includegraphics[width=.70\columnwidth]{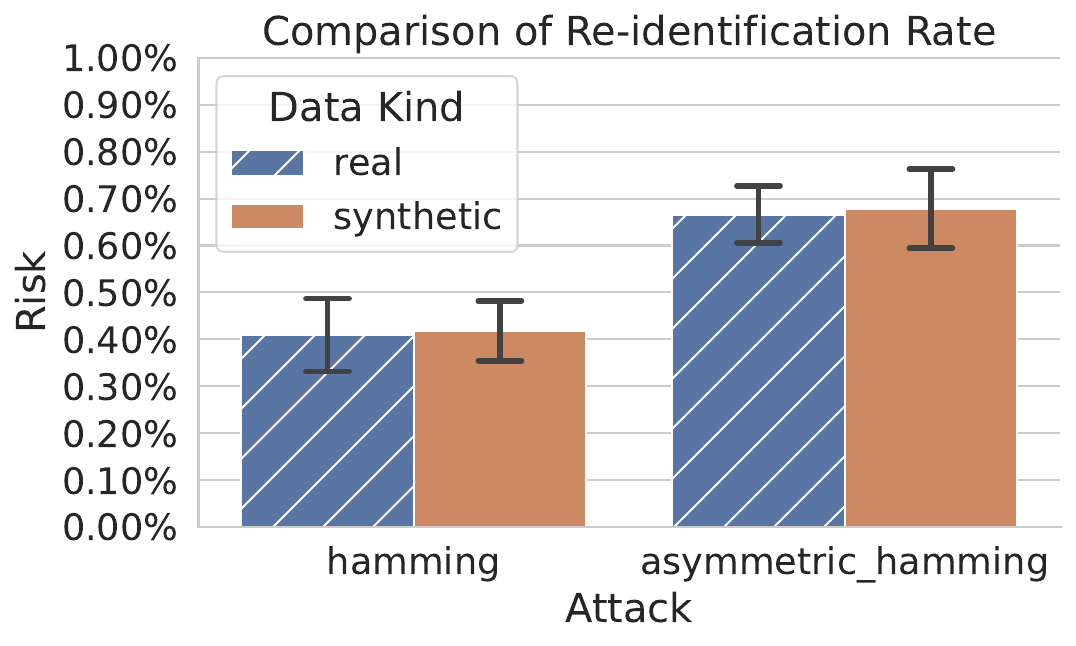}
    \caption{The re-identification rates of the Hamming and Asymmetric Hamming attacks on the real and synthetic data after 4 weeks of observation.
    Error bars show $\pm$ one standard deviation over 10 independent runs.}
    \label{fig:reidentificationRates}
\end{figure}

\subsubsection{Results}
We now present the results of the re-identification risk experiment. To compare our synthetic data with the real world data, we obtain a sample of $10$ million users' traces observed over $4$ weeks from the real user data, and compare them with $10$ million synthetic traces (of the same number of weeks) generated by our synthetic data model. On each of these two datasets (separately) we repeat the $2$ attacks presented before on a sample of $10240$ random users $U^*$ and define as re-identification risk the fraction of correct answers for the attack / dataset. This process is further repeated for $10$ independent trials (for the synthetic data, we repeat independently the training of the model on the DP statistics and the sampling of the data from the model as well). 

Our results are presented in \Cref{fig:reidentificationRates} which compares the  re-identification risk of the two attacks (Hamming Attack and Asymmetric Hamming Attack) on the real user data and on our synthetic dataset. The bars show the mean re-id risk. The error bars, instead, represent $1$ standard deviation of the re-identification risk (over the $10$ repetitions of the experiment).

We make a series of observations. First, we observe that we closely replicate the observations of prior work~\cite{carey2023measuring} concerning the re-identification risk of the two attacks on real data. The Asymmetric Attack is more sophisticated and has a higher re-identification risk. However, both attacks have a re-id risk less than $1\%$.

More interesting, it is possible to observe that re-identification risk measured by each attack on the synthetic data is remarkably close to the result of the {\it same} attack on real data. In both attacks, the re-identification risk on the synthetic data is within 1 standard deviation from the mean re-identification of the same attack on real data. This highlights that our model, despite its necessary simplifications, is sufficiently accurate to characterizing the re-identification risk of the Topics API under the  attacks studied.

\section{Conclusion, limitations of our work and future work}
In this paper we presented a novel methodology for generating  synthetic data enabling the study of the privacy properties of the Topics API. Using our method we release an open source dataset~\cite{datarelease} that is derived from real user data while providing strong privacy protection. As shown in our experiments, this dataset is accurate enough to preserve important properties of the data including allowing us to replicate a prior re-identification analysis performed on a proprietary datasets. 

Notwithstanding these positive results, it is important to highlight that like all synthetic data releases, our data can not replicate exactly {\it all} properties of the original (private) data. In fact, while we show that many important statistics are accurately preserved, it is indeed possible that the model might not preserve other statistics (including important statistics that are not considered in our work, but are relevant to other studies). In our model, we made several assumptions, especially on the time-stationarity of the user data. We stress that the stationarity assumption in particular is crucial for the validity of our results. While we partially validated these assumptions on the data available, it is possible that the data may evolve over longer time scales (or in the future), diverging from our measured statistics. Moreover, as we observed, since preserving an arbitrary distribution is infeasible due to its extremely large parameter space, our model must necessarily have neglected some aspects of the real dataset. While our methodology focused on replicating some prior attacks on re-identification risk, it is possible that the synthetic data we released may be unable to correctly replicate other methodologies.

While this synthetic data release has its inherent limitations, we believe that our work will benefit the study of the privacy properties of the Topics API providing researchers with access to large-scale and sufficiently realistic data. As future work, we believe that our methodology could be generalized to several Privacy Preserving Ads APIs providing insights in other efforts in this space. We also believe that future work on improving our synthetic data methodology, by for instance, reducing the assumptions needed for its validity, is of great interest.

\section*{Acknowledgment}
Adel Javanmard is supported in part by the NSF Award
DMS-2311024, the Sloan fellowship in Mathematics, an
Adobe Faculty Research Award, an Amazon Faculty Research Award, and an iORB grant from USC Marshall
School of Business. The authors are grateful to anonymous reviewers for their
feedback on improving this paper.

%\bibliography{references.bib}

\printbibliography

\appendix

\section{Omitted Proofs}
\label{app:omitted-proofs}

\begin{figure}[b]
\centering
    \subcaptionbox{Top 10 Within Week Pairs \label{fig:top-within}}{
    \begin{tabular}{lr}
    \toprule
    Name & Rate \\
    \midrule
    '/Arts \& Entertainment' and '/News & 11.77\% \\
    '/News' and '/Sports & 8.58\% \\
    '/Arts \& Entertainment' and '/Internet \& Telecom & 6.74\% \\
    '/Internet \& Telecom' and '/News & 5.27\% \\
    '/Arts \& Entertainment' and '/Games/Computer \& Video Games & 5.00\% \\
    '/Sports' and '/Sports/Soccer & 4.60\% \\
    '/Arts \& Entertainment' and '/Arts \& Entertainment/TV \& Video & 4.58\% \\
    '/News' and '/Shopping & 4.41\% \\
    '/News' and '/Arts \& Entertainment/TV \& Video & 4.18\% \\
    '/News' and '/News/Local News & 4.17\% \\
    \bottomrule
    \end{tabular}
    \vspace{.4in}
    }
    
    \subcaptionbox{Top 10 Across Weeks Pairs \label{fig:top-across}}{
    \begin{tabular}{lr}
    \toprule
    Name & Rate \\
    \midrule
    '/News' then '/News & 22.01\% \\
    '/Arts \& Entertainment' then '/Arts \& Entertainment & 12.27\% \\
    '/Shopping' then '/Shopping & 12.05\% \\
    '/Sports' then '/Sports & 11.66\% \\
    '/Games/Computer \& Video Games' then '/Games/Computer \& Video Games & 10.84\% \\
    '/News' then '/Arts \& Entertainment & 10.16\% \\
    '/Arts \& Entertainment' then '/News & 10.11\% \\
    '/Sports' then '/News & 8.20\% \\
    '/News' then '/Sports & 7.97\% \\
    '/News' then '/Shopping & 5.86\% \\
    \bottomrule
    \end{tabular}
    }
    \caption{Topics and pairs of topics with the largest values of  $\qwithin^*$, and $\qacross^*$. \label{fig:stats-app} \vspace{1.5cm}}
\end{figure}

\lemDifferentiable*
\begin{proof}
Let $\theta \in \reals^{T \times r \times k \times |\topics|}$ be the model parameters and let $I, S_1, \ldots, S_r$ be the type index and sequence of topic sets sampled by \Cref{alg:sampling} using parameters $\theta$.
Define $P_\theta \in [0,1]^{T \times r \times k \times |\topics|}$ by $P_\theta[t,i,s,:] = \operatorname{softmax}(\theta[t, i, s, :])$ so that $P_\theta[t, i, s, o]$ is the probability of sampling topic $o$ in slot $s$ of week $i$ for type $t$.

We begin by calculating the probability of \emph{not} observing any topic from an arbitrary set $A \subset \topics$ in week $i$ conditioned on the event $I = t$.
This quantity will appear several times in $\qsingle$, $\qwithin$, and $\qacross$.
Recall that $x_{i,s}$ is the topic sampled by \Cref{alg:sampling} in slot $s$ for week $i$ using the slot distributions for type $I$.
For any type $t \in [T]$, week $i \in [r]$, and set of topics $A \subset \topics$, we have
\begin{align*}
    \prob(S_i \cap A = \emptyset \mid I = t)
    &= \prob(\forall s \in [k] \,:\, x_{i,s} \not \in A \mid I = t) \\
    &= \prod_{s=1}^k \prob(x_{i,s} \not \in A \mid I = t) \\
    &= \prod_{s=1}^k \left(
        1 - \sum_{x \in A} P[t,i,s,x]
    \right),
\end{align*}
where the second equality follows from the fact that the topics sampled in the slots for week $i$ are independent conditioned on the type.
Let $N(\theta; t, i, A) = \prod_{s=1}^k(1 - \sum_{x \in A}P[t,i,s,x]) = \prob(S_i \cap A = \emptyset \mid I = t)$.
Since $N$ is a differentiable function of $P$, and $P$ is obtained by applying $\operatorname{softmax}$, which is also differentiable, to $\theta$, it follows that $N$ is a differentiable function of $\theta$.
With this, we turn to calculating expressions for each statistic computed in \Cref{sec:stat}.

Now consider the probability of observing topic $o \in \topics$ in the topic set for week $i \in [r]$.
We have
\begin{align*}
    \prob(o \in S_i)
    &= \sum_{t=1}^T \prob(o \in S_i \mid I = t) \prob(I = t) \\
    &= \frac{1}{T} \sum_{t=1}^T \bigl(1 - \prob(o \not \in S_i \mid I = t)\bigr)\\
    &= \frac{1}{T} \sum_{t=1}^T \bigl(1 - N(\theta; t, i, \{o\})\bigr) \\
    &=: \qsingle(\theta; i, o).
\end{align*}

Calculating the probability that two distinct topics $o_1, o_2 \in \topics$ both appear in week $i \in [r]$ can be done using the inclusion-exclusion principle:
\begin{align*}
    &\prob(\{o_1, o_2\}\subset S_i)
    = \frac{1}{T} \sum_{t=1}^T \prob(\{o_1, o_2\} \subset S_i \mid I = t) \\
    &= \frac{1}{T} \sum_{t=1}^T
    \bigl(1 - \prob(o_1 \not \in S_i \vee o_2 \not \in S_i \mid I = t)\bigr) \\
    &= \frac{1}{T}\sum_{t=1}^T\bigl(
    1 - \prob(o_1 \not \in S_i)
     - \prob(o_2 \not \in S_i)
     + \prob(o_1 \not \in S_i \wedge o_2 \not \in S_i)
    \bigr) \\
    &= \frac{1}{T} \sum_{t=1}^T \bigl(
        1
        - N(\theta; t, i, \{o_1\})
        - N(\theta; t, i, \{o_2\})
        + N(\theta; t, i, \{o_1, o_2\})
    \bigr)\\
    &=:\qwithin(\theta; i, o_1, o_2).
\end{align*}

\begin{figure*}[]
\centering
    \subcaptionbox{Single Topic Frequencies}{
    \begin{tabular}{l|cccc}
    \toprule
     & week 1 & week 2 & week 3 & week 4 \\
    \midrule
    week 1 & 1.000000 & 0.999865 & 0.998922 & 0.998571 \\
    week 2 & & 1.000000 & 0.999311 & 0.998756 \\
    week 3 & & & 1.000000 & 0.999546 \\
    week 4 & & & & 1.000000 \\
    \bottomrule
    \end{tabular}
    \vspace{.4in}
    }
    
    \subcaptionbox{Topic Pair Frequencies}{
    \begin{tabular}{l|cccc}
    \toprule
     & week 1 & week 2 & week 3 & week 4 \\
    \midrule
    week 1 & 1.000000 & 0.999682 & 0.998101 & 0.997261 \\
    week 2 &  & 1.000000 & 0.998816 & 0.997761 \\
    week 3 &  &  & 1.000000 & 0.999091 \\
    week 4 &  &  &  & 1.000000 \\
    \bottomrule
    \end{tabular}
    \vspace{.4in}
    }
    
    \subcaptionbox{Topic Transitions}{
    \begin{tabular}{l|ccc}
    \toprule
     & week 1 to 2 & week 2 to 3 & week 3 to 4 \\
    \midrule
    week 1 to 2 & 1.000000 & 0.999384 & 0.998256 \\
    week 2 to 3 &  & 1.000000 & 0.999186 \\
    week 3 to 4 &  &  & 1.000000 \\
    \bottomrule
    \end{tabular}
    }
    \caption{Pearson correlation coefficient between statistics computed in two periods of time. Pearson correlation coefficient is between $-1$ perfect negative correlation, $+1$ perfect positive correlation. $0$ means no correlation.}
    \label{fig:stationarity}
\end{figure*}

The probability of observing topic $o_1 \in \topics$ in week $i_1 \in [r]$ and topic $o_2 \in \topics$ in week $i_2 \in [r]$ when $i_1 \neq i_2$ can be calculated using the fact that, conditioned on the type $I$, these events are independent.
\begin{align*}
    \prob(o_1 \in S_{i_1} \wedge& o_2 \in S_{i_2})
    = \frac{1}{T} \sum_{t=1}^T \prob(o_1 \in S_{i_1} \wedge o_2 \in S_{i_2} \mid I = t) \\
    &= \frac{1}{T} \sum_{t=1}^T \prob(o_1 \in S_{i_1} \mid I = t)
    \cdot
    \prob(o_2 \in S_{i_2} \mid I = t) \\
    &= \frac{1}{T} \sum_{t=1}^T
    \bigl(1 - N(\theta; t, i_1, \{o_1\})\bigr)
    \cdot
    \bigl(1 - N(\theta; t, i_2, \{o_2\})\bigr)\\
    &=: \qacross(\theta; i_1, i_2, o_1, o_2).
\end{align*}

Finally, all three functions $\qsingle$, $\qwithin$, and $\qacross$ are differentiable functions of $\theta$, since they are differentiable in $N$.
\end{proof}

\lemSensitivity*
\begin{proof}
Let $$D' = D \cup \{(S_0(v), S_1(v))\}$$ for a user $v\notin \users$.
Notice that $f_{11}(D')-f_{11}(D)$ is vector with only 0,1 entries where the $1$ correspond to the ${k \choose 2}$ counts for topics pairs $o_i, o_j \in S_0(v)$  in the corresponding positions in $f_{11}$.

The same is true for  $f_{22}$, while for $f_{12}$ we have a 1 in the $k^2$ counts for topics pairs $o_i \in S_0(v)$ and $o_j \in S_1(v)$ ( in the corresponding positions in $f_{12}$). 
Hence the $\ell_2$ norm of the difference is $\sqrt{{k \choose 2}}$ for $f_{11}$ and $f_{22}$ and $k$ for $f_{12}$.
\end{proof}

\section{Additional plots}
\label{app:additional-plots}

\paragraph{DP statistics}
In Figure~\ref{fig:stats-app} we present additional statistics on the top $10$
pairs of topics within a week (\Cref{fig:top-within}) and top $10$  pairs of topics across two consecutive weeks (\Cref{fig:top-across}). Notice how most top pairs of topics in the $\qwithin^*$ and $\qacross^*$ are from the top topics by frequency~\Cref{fig:stats-top} as expected. Also notice that due to the correlation of topics in two consecutive weeks several top pairs in~\Cref{fig:top-across} have the same topic.

\section{Stationarity of Topics}\label{app:stationarity}
This section studies the extent to which the statistics we measure are stationary over time.
We define the following single-week versions of each statistic:
\begin{align*}
    \qsingle^*(o, j) &= \prob_{U}(o \in S_j(U)) \\
    \qwithin^*(o_1, o_2, j) &= \prob_{U}(\{o_1, o_2\} \subset S_j(U))\\
    \qacross^*(o_1, o_2, j) &= \prob_{U}(o_1 \in S_{j-1}(U) \wedge o_2 \in S_{j}(U)).
\end{align*}
These are identical to the statistics defined in \Cref{sec:stat} except that instead of averaging over weeks of training data, they define the statistics from a specific week, or a specific week transition in the case of $\qacross^*$.
In \Cref{fig:stationarity}, we provide the Pearson correlation coefficient between each pair of weeks (or pair of week transitions) for each of the statistics $\qsingle^*$, $\qwithin^*$, and $\qacross^*$.
We see that there is a very high degree of correlation between the statistics, supporting the claim that these statistics are very close to being stationary over time. These results are reported in~
\Cref{fig:stationarity}. Notice that for any statistic and for any pair of periods the Pearson correlation coefficient is $\ge 99.8\%$ thus supporting the stationarity assumption at the basis of our model.

\end{document}